\newsavebox\CBox
\def\textBF#1{\sbox\CBox{#1}\resizebox{\wd\CBox}{\ht\CBox}{\textbf{#1}}}
\newcommand{\Va}{{\mathcal V_{\rm abs}}}
\newcommand{\Vt}{{\mathcal V_{\rm trs}}}
\newcommand{\Vtpt}{{\mathcal V^{t+1}_{\rm trs}}}
\newcommand{\Xt}{{\mathcal X_{x_T =j}^{T=\tau}}}
\newcommand{\cB}{\mathcal B}
\newcommand{\cA}{\mathcal A}
\newcommand{\cC}{{\mathcal C}}
\newtheorem{theorem}{Theorem}[section]
\newtheorem{proposition}{Proposition}[section]
\newtheorem{remark}{Remark}[section]
\newtheorem{problem}{Problem}[section]
\begin{document}

\title{Schr\"odinger's control and estimation paradigm\\
with spatio-temporal distributions on graphs}
\author{Asmaa Eldesoukey and Tryphon T. Georgiou
\thanks{This research has been supported in part by NSF under ECCS-2347357, in part by AFOSR under FA9550-24-1-0278, and in part by ARO under W911NF-22-1-292. }
\thanks{The authors are with the Department of Mechanical and Aerospace Engineering, University of California, Irvine, CA; aeldesou@uci.edu, tryphon@uci.edu}}

\maketitle
\begin{abstract}
The problem of reconciling a prior probability law on paths with data was introduced by E. Schr\"odinger in 1931/32. It represents an early formulation of a maximum likelihood problem. This specific formulation can also be seen as the control problem to modify the law of a diffusion process so as to match specifications on marginal distributions at given times.
Thereby, in recent years, this so-called {\em Schr\"odinger's bridge problem} has been at the center of the uncertainty control development.
However, an understudied facet of this program has been to address uncertainty in {\em space} (state) and {\em time}, modeling the effect of tasks being completed contingent on meeting a certain condition at some random time instead of imposing specifications at fixed times.
The present work is a study to extend Schr\"odinger's paradigm on such an issue, and herein, it is tackled in the context of random walks on directed graphs.
Specifically, we study the case where one marginal is the initial probability distribution on a Markov chain, while others are marginals of stopping (first-arrival) times at absorbing states, signifying completion of tasks. We show when the prior law on paths is Markov, a Markov policy is once again optimal to satisfy those marginal constraints with respect to a likelihood cost following Schr\"odinger's dictum. Based on this, we present the mathematical formulation involving a {\em Sinkhorn}-type iteration to construct the optimal probability law on paths matching the spatio-temporal marginals.
\end{abstract}

\begin{IEEEkeywords}
Markov processes,  First-passage times, Directed graphs, Maximum likelihood estimation, Stopping times.
\end{IEEEkeywords}

\section{Introduction} \label{sec:intro}

In a 1931/32 study \cite{Sch31, Sch32}, E.~Schr\"odinger asked for the most likely evolution of particles between two points in time where their distributions are observed.
In this, he laid out elements of a {\em large deviations} theory for the first time. Indeed, at a time when much of probability theory was in its infancy, and H.~Cram\'er's and I. N.~Sanov's theorems \cite{cramer1938nouveau,sanov1958probability} were still a few years away, Schr\"odinger single-handedly formulated and solved a maximum likelihood problem that is now known as the {\em Schr\"odinger's bridge problem} (SBP)\cite{chen2021stochastic}.

Schr\"odinger's paradigm was slow to make inroads into probability theory until about the 1970s with the work of B. Jamison, H. F\"ollmer and others  \cite{jamison1974reciprocal,jamison1975markov,follmer1988random,follmer1988large,csiszar1998method}. Since then, it has become an integral part of stochastic control, fueling the recent fast development of uncertainty control -- a discipline that focuses on regulating state-related uncertainties of stochastic dynamical systems during operation as well as at terminal times, see \cite{todorov2009efficient, vladimirov2015state,  chen2016robust, wang2021survey, chen2021controlling, chetrite2021schrodinger}.
 Notably, \cite{dai1991stochastic} linked the minimum-energy stochastic control problem to the SBP using Fleming's logarithmic transformation \cite{fleming2005logarithmic}. Furthermore, through the body of work in \cite{beghi1996relative,chen2015optimal, chen2015optimalII, chen2018optimal, chen2016relation}, it became evident that steering a stochastic linear system, even of singular noise intensity, between two endpoint Gaussian distributions is optimally achievable via a {\em state-feedback} controller utilizing the SBP formalism. This optimal controller typically has a closed-form expression with a feedback gain that can be obtained by solving two differential Riccati or Lyapunov equations that are nonlinearly coupled at the two endpoints.
In contrast to this earlier literature, where the control purpose is to bring the system to the vicinity of target states at specified times, we are interested in regulating the times when target states are reached.
Thereby, we investigate a different angle to  Schr\"odinger's dictum in which we now seek to reconcile stopping-time (first-passage-time) marginals instead of state-marginal distributions at predetermined times.

 Noticeably, first-passage-time statistics grew indispensable in modeling various processes, to wit, the integrate-and-fire neuron models \cite{redner2001guide}, bacterial steering via switching \cite{morse2015flagellar},  ovarian follicles activating menopause and proliferation \cite{lawley2023slowest,clement2021stochastic}, chemical transport in active molecular processes \cite{neri2017statistics},  stochastic delays in chemical reactions \cite{barrio2013reduction} and many others \cite{chou2014first, iyer2016first,castro2018first,khudabukhsh2020incorporating}. 
A natural next step is to regulate the first-passage temporal statistics via suitable control.
For instance, in congestion control, specifying, e.g., a uniform first-arrival-time distribution of agents at the entrance of a site may be of practical use. In another instance, in landing a space probe, the distribution at landing is of essence, whereas the precise landing time can be, ipso facto, {\em random} due to possibly inescapable stochastic disturbances along the trajectory.
Interestingly,  F. Baudoin \cite{baudoin2002conditioned}, motivated by applications to finance, and
later on, C. Monthus and A. Mazzolo \cite{monthus2022conditioned} studied the conditioning of an It\^o diffusion, starting from a Dirac, to match a first-hitting-time density.

In the present work, we explore such an issue for a controlled random walk on a network. Time and space are discrete, namely, values for the time index $\tau$ belong to some specified finite window $[1,t] \subset \mathbb N$, while the position variable $X_\tau$ of a random walker takes values in the vertex set $\mathcal V$ of a finite directed graph $\mathcal G$ and the set $\mathcal V$ contains absorbing vertices.
As it turns out, our control action entails altering given prior transition probabilities between the vertices (states) to regulate the first-arrival times at the absorbing states over the indicated time window.  
As we restrict the present work to a discrete setting, a congestion control example is presented to numerically demonstrate our approach, while the afore-described space probe problem is spared for future work to a more specialized discussion. 

At a closer look, we identify the control cost with the value of a likelihood functional in observing a collection of random walkers as they traverse the network to match specifications on first arrivals. In that, we frame the control problem as a large deviation problem, in the spirit of Schr\"odinger's approach.
Remarkably, the solution is a Markov law on paths,  or equivalently, the corresponding control action at every point in time is independent of past information given the present state as in original SBP \cite{georgiou2015positive,chen2016robust,chen2016relation} but for different reasons. That is, the controller works to reweigh prior Markov transition kernels.

The structure of the paper is as follows. In Section \ref{sec:martingale}, we bring our setting closer to the reader by casting a {\em De Moivre's martingale} example in an SBP but with stopping-time marginals. In Section \ref{sec:atypical}, we introduce notation and preliminaries. In Section \ref{sec:markov}, we delve into an important quality of the original SBP and, interestingly, of the variant with stopping times, that the optimal posterior law on paths preserves the Markovianity of the prior process in both problems. Explicit construction of the solution to modify the prior so as to match specifications on  stopping probabilities is provided in
Section \ref{sec:SBP}. 
 An outlook on the regularized transport on graphs is in Section \ref{sec:duality}. Section \ref{sec:examples} provides academic examples, including De Moivre's martingale, to elucidate the application of the framework. The paper concludes with a brief discussion in Section \ref{sec:conclusions} on generalizations and potential future directions.

\section{De Moivre's Martingale} \label{sec:martingale}

We begin our exposition with a motivating example, De Moivre's martingale, which models a betting game with two stopping conditions \cite{grimmett2020probability}. The betting problem involves gamblers reaching a set amount before terminating the gambling game. This model problem distinguishes between two possible outcomes: success in reaching the goal or ruin, where all the capital is lost. Now, imagine that we observe a series of game rounds. In each round, one either wins the bet and increases their current wealth by one token or loses with the opposite effect.
Assume the game begins with a set of players having wealth in a finite set $\Vt=\{1,\ldots,M-1\}$ of values for the corresponding number of tokens according to an initial probability distribution $\hat \mu_0$. If a player loses all the capital or achieves a cap of cumulative $M$ tokens in any number of rounds, they discontinue the game. We presume that our prior law corresponds to a fair game. 

Over a discrete window of time $[1,t] \subset \mathbb N$, we observe the distribution of players that terminate the game for the above reasons, reaching the goal of $M$ tokens or being ruined by hitting the lower bound $0$. These two distributions are denoted by $\hat \nu^{\rm win}(\tau)$ and $\hat \nu^{\rm ruin}(\tau)$, for $\tau \in [1,t]$, respectively. However, these observed temporal marginals may not be consistent with the prior law, leading us to suspect foul play. Thus, we seek to identify the {\em most likely} perturbation of the prior law that may explain the observed marginals and point to suspect times when the foul play may have taken place. 

We model this setting via a Markov chain with two absorbing states representing $0$ and $M$ tokens. This Markov chain has a total of $\vert \Vt \vert +2$ states\footnote{$\vert\cdot \vert $ stands for the cardinality of a set.}. The problem being addressed is to determine the most likely law on trajectories beginning with the initial probability distribution $\hat \mu_0$ with support in $\Vt$ and gives rise to the recorded marginals $ \hat \nu^{\rm win}$ and $\hat \nu^{\rm ruin}$. Fig. \ref{fig:fig1} exemplifies this problem. This example is revisited at the end of the paper after a framework in the spirit of Schr\"odinger for matching spatio-temporal marginals is in place.

\begin{figure}[t]
    \centering
\includegraphics[width= \columnwidth]{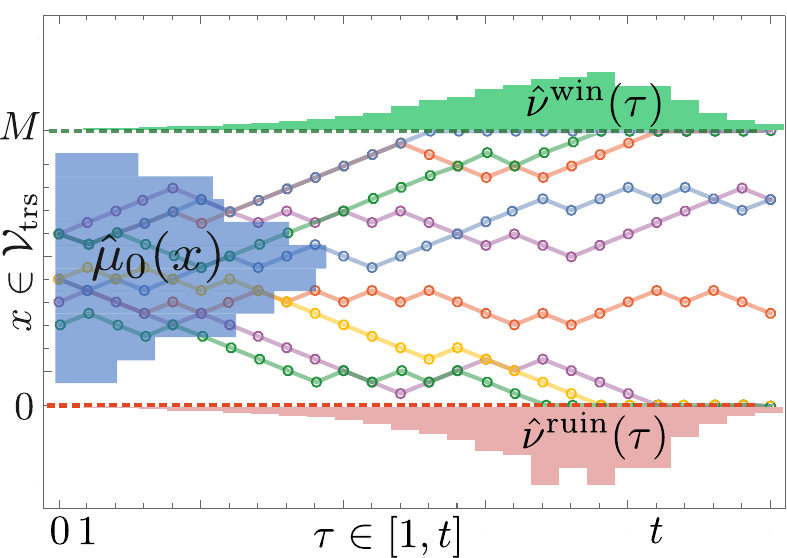}
   \caption{Abstract illustration of De Moivre's Martingale problem setting -- matching the marginal $\hat \mu_0$ (spatial) and the two marginals $\hat \nu^{\rm win}$ and $\hat \nu^{\rm ruin}$ (temporal). The requirement is to find the most likely law on paths.}
    \label{fig:fig1}
\end{figure}

\section{Notation and Rudiments on Large Deviations}\label{sec:atypical}

Consider a directed finite graph $\mathcal G=(\mathcal V,\mathcal E)$, with vertex set $\mathcal V$ and edge set $\mathcal E$. We use $x,y\in\mathcal V$ to denote vertices and often adjoin time $\tau \in\mathbb Z_+ $ as a subscript as in $x_\tau\in\mathcal V$ to combine space-time indexing. We consider a random walk $(X_\tau)_{\tau\in[0,t]}$ on $\mathcal G$ as our process where $[0,t]\subset\mathbb Z_+ $ and $t$ is a bound on times we consider. We denote by $\Va$ the set containing all {\em absorbing} vertices (states) and by $\Vt:=\mathcal V \setminus \Va$ its complement of {\em transient} vertices. Throughout, we set $|\Va|=m$ and  $|\Vt|=n$.

Thus, the sample space $\Omega$ we consider for the paths of the random walk is discrete and finite. We typically use the symbols $Q,P$ to denote probability laws (measures) on sample paths and $L_N$ for an empirical law of $N$ samples. We reserve Greek letters such as $\mu,\nu$ for marginal distributions and $\Pi,\varPi$ for transition matrices. 
Vectors and matrices will be indexed when needed as in $\Pi_\tau$ with subscript denoting time. 
The given initial probability vector will be denoted by $\hat \mu_0 \in \mathbb R_+^{n+m}$, where
$\hat \mu_0(x_0) := P(X_0 = x_0)$.   With no loss of generality, we assume that the support of $\hat \mu_0$ is in $\Vt$, that is, $\hat \mu_0(\Va) = 0$ when we consider the SBP with stopping times.
 Next, we consider the time when the random walk first arrives at $\Va$ as 
 \begin{align*}
T := \inf \{ \tau > 0 \mid X_\tau \in \Va \}.
\end{align*}

This variable can represent the instant of task completion and is random. It is a {\em stopping time}, and reaching any absorbing vertex is the stopping rule\footnote{Since we assume no mass is stationed initially at the absorbing vertices, the variable $T$ takes values strictly greater than $0$.}. We assume given marginals of stopping times, each typically denoted by $\hat \nu_j \in \mathbb R_+^{t}$ and with support in $[1,t]$ and the subscript $j$ denoting a certain absorbing site in $\Va$.
We will be concerned with suitably adjusting a given prior law  $Q$ to match the initial probability vector $\hat \mu_0$ and the specified {\em spatio-temporal} set of marginals $\{ \hat \nu_j,j \in \Va\}$.

In general, a probability law $P$ is said to be {\em absolutely continuous with respect to} $Q$, and denoted by $P \ll Q$,
when $Q(\omega)=0$ implies that $P(\omega)=0$ for any
$\omega \in \Omega$. Alternatively, we also say that the support of $Q$ contains the support of $P$.
The \textit{Kullback-Leibler divergence} from $Q$ to $P$, also known as \textit{relative entropy} \cite{kullback1951information,cover1999elements}, is defined by
 \begin{align}
    \mathbb D (P \parallel  Q) := \left\{\begin{array}{ll}  \sum\limits_{\omega \in \Omega} P(\omega) \log 
\displaystyle \frac{P(\omega)}{Q(\omega)}& \text{if }P \ll Q,\\[8 pt]
 \infty &\text{otherwise},
 \end{array}\right.
 \label{KL}
 \end{align}
 and $0 \log 0 = 0$. 
 To our interest,  the relative entropy can quantify the likelihood of observing empirical distributions when sampling from a given law, as expressed by the following fundamental results in the large deviations theory, see also \cite{sanov1958probability,dembo2009large,cover1999elements}.

 If $\mathcal M$ is a Polish space and $I: \mathcal M \to [0,\infty]$ is a lower semi-continuous map, we say that a sequence of probability measures $(Q_N)_{N \geq 0}$ on $\mathcal M$ satisfies a large deviation principle with a rate function $I$ if for every Borel measurable set $\Gamma \subset \mathcal M$,
\begin{align*}
 -\inf_{\upsilon \in \Gamma^{\mathrm{o}}} I(\upsilon) &\leq  \liminf_{N \to \infty} \frac{1}{N} \log Q_N(\Gamma) \leq \\
 &\limsup_{N \to \infty} \frac{1}{N} \log Q_N(\Gamma) \leq -\inf_{\upsilon \in \overline \Gamma} I(\upsilon),
\end{align*}
where $ \Gamma^{\mathrm{o}},\overline{\Gamma}$ are the interior and the closure of $\Gamma$, respectively.
\vspace{1pt}
\begin{theorem}\label{thm:Sanov} (Sanov's Theorem  in Finite Dimensions) Let  $\mathbf X_1, 
\hdots, \mathbf X_N$ be independent and identically distributed (i.i.d.) random vectors taking values in $\mathcal X$
with $|\mathcal X|<\infty$ and law $Q(\mathbf x)$.
For any set $\Gamma$ of probability distributions over $\mathcal X$ that is closed with nonempty interior,
\begin{align*}
   \lim_{N\to \infty} -\frac{1}{N}\log Q^N(L_N \in\Gamma) = \mathbb D(P^\star  \parallel  Q),
\end{align*}
 where $Q^N$ is the joint probability distribution on $\mathcal X^N$ and\footnote{ $Q^N(\mathbf X_1 =\mathbf x_1, \ldots, \mathbf X_N=\mathbf x_N) : = \prod_{i=1}^N Q(\mathbf X_i = x_i), \ \text{for }  \mathbf x_i \in \mathcal X.$} \begin{align*}
    P^\star  := \arg \min_{P \in \Gamma} \mathbb D(P \parallel  Q). \tag*{$\square$}
\end{align*} 
\end{theorem} 
\vspace{4pt}

 Then, Sanov's theorem states the sequence $Q^N(L_N \in \cdot)$ satisfies the large deviation principle with the rate function $\mathbb D(\cdot \parallel Q)$ in the case of the finite alphabet (i.e., on the space of $\vert \mathcal X \vert$-dimensional probability vectors). Intuitively, the theorem reveals that the likelihood of observing the empirical distributions $L_N \in \Gamma$  under a prior law $Q$ decays exponentially fast with the number of samples $N$
up to a suitable normalizing factor. Thus, the rate function $\mathbb D(\cdot \parallel Q)$ quantifies the likelihood of rare events, and the {\em most likely event}, out of all the rare ones that agree with the neighborhood $\Gamma$ of $L_N$, is the one having least divergence from the prior  $Q$. 

 Curiously, the insights provided by Sanov's result can be traced to the work of Schr\"odinger \cite{Sch31,Sch32,chetrite2021schrodinger}; indeed, in much the same way, Schr\"odinger discretized the path-space trajectories of Brownian particles to arrive at the most likely probability distribution on trajectories that is consistent with observed two endpoint marginals. We call this specific problem setting the {\em classical SBP}.

 Next, we formulate the classical SBP alongside the problem involving stopping-time probabilistic constraints as large deviation problems. Furthermore, we reveal the structure of the solutions in each case. The structure facilitates the choice of appropriate parametrization in search of the solution to our problem starting Section \ref{sec:SBP}.

\section{On Markovian structure of posteriors} \label{sec:markov}

As a prelude to our framework, we lay out the SBP with stopping times preceded by the classical SBP. We also provide a simple derivation of the Markov property in both problems -- when the prior is Markov, so is the most likely posterior. While the  Markovian structure in the classical SBP is standard and underlies the fact that density control can be affected by state feedback \cite[Section 3]{chen2021optimal}, \cite{chen2015optimal}, \cite{chen2021stochastic},
in the present situation where we design for marginals of first-arrival times, the Markovian property in the posterior is not obvious. 
To this end, we consider paths 
\begin{align}\label{eq:paths}
 \mathbf x:=(x_0,x_1,\ldots, x_t)   
\end{align}
over the finite time window $[0,t] \subset \mathbb Z_+ $, taking values in the vertex set $\mathcal V$, and we denote by  $Q(\mathbf x)$ or $Q({x_0,x_1,\ldots,x_t})$ the probability law of a path in $\mathcal V^{t+1}$. 
The law $Q$, or any other law, is Markov,  i.e., it corresponds to a Markov process if and only if it factors as
\begin{equation}\label{eq:Markov}
Q({x_0,x_1,\ldots,x_{t-1},x_t})=q_1({x_0,x_1})\times \cdots \times q_t({x_{t-1},x_t}),    
\end{equation}
for some suitably chosen nonnegative factors. We are not yet concerned with interpreting the factors as conditional probabilities, as this is secondary here. For our purposes, it is only important herein that the law factors as the product of some successive \textit{rank-$2$ tensors}.

\subsection{The Classical SBP}\label{sec:SBsection}

 The classical SBP is described in the discrete setting by the following large deviation problem.
 \vspace{4pt}
\begin{problem} \label{prob:classSBP}  Given a prior law $Q$ on $\mathcal V^{t+1}$ and two endpoint marginals $\hat \mu_0, \hat \mu_t \in \mathbb R_+^{n+m}$ with support in $\mathcal V$, determine
\begin{subequations} \label{eq:SB}
\begin{align}\label{eq:SB1}
    P^\star := \arg  \min_{P \ll Q} \sum_{\mathbf x \in \mathcal V^{t+1}} P(\mathbf x) \log \frac{P(\mathbf x)}{Q(\mathbf x)}, \\
\label{eq:SB2}
\text{subject to }     \sum_{x_1,\ldots,x_t \in \mathcal V} \! \! \! P(x_0,x_1,\ldots, x_{t-1},x_t) &= \hat \mu_0{(x_0)},  \\ 
     \label{eq:SB3}
   \text{and } \sum_{x_0,\ldots, x_{t-1} \in \mathcal V} \! \! \!P(x_0,x_1,\ldots,x_{t-1}, x_t) & = \hat \mu_t{(x_t)}, 
\end{align} 
 for all $x_0,x_t \in \mathcal V$. \hfill{$\square$}
\end{subequations} 
\end{problem}
\vspace{4pt}

Here, and throughout all similar optimization problems, we assume that the conditions are feasible (here \eqref{eq:SB2} and \eqref{eq:SB3}), and thereby, strong duality holds.
Introducing Lagrange multipliers,  the augmented Lagrangian of Problem \ref{prob:classSBP} is
\begin{align*}
\mathcal L_{\rm SBP} (P,\Lambda_0,\Lambda_t) = \! \!  \! \!  \! \! &\sum_{x_0,\ldots,x_t \in \mathcal V} \! \! \bigg\{ P(x_0, \ldots, x_t) 
\log \frac{P(x_0,\ldots, x_t)}{Q(x_0\ldots, x_t)}  \\
&+  \Lambda_0(x_0) \big[P(x_0,\ldots, x_t)-\hat \mu_0(x_0)\big] \\
&+  \Lambda_t(x_t) \big[P(x_0,\ldots, x_t) -\hat \mu_t(x_t) \big]\bigg\}.
\end{align*}
$\sum_{x_0,\ldots,x_t \in \mathcal V} \Lambda_0(x_0)P(x_0,\ldots, x_t)$ can be rewritten as
\begin{equation*}\label{eq:convenient}
\sum_{i_0\in\mathcal V}\sum_{x_0,\ldots,x_t\in \mathcal V}\Lambda_0(i_0)P(x_0,\ldots, x_t) \mathds 1_{\{x_0=i_0\}}(x_0,\ldots, x_t),
\end{equation*}
and similarly for the other summation involving $\Lambda_t$, where  $\mathds 1_{\{x=\alpha\}}: \mathcal V^{t+1} \to \{0,1\}$ denotes the characteristic (indicator) function which equals unity if and only if $x=\alpha$. Then, the minimizer takes the form
    \begin{align} \label{eq:Pstar}
    P^\star (\mathbf x) = Q(\mathbf x) & \times e^{-1}  \times \exp {\bigg(-\sum_{i_0\in\mathcal V}\Lambda_0(i_0)\mathds 1_{\{x_0=i_0\}}(\mathbf x)\bigg)} \nonumber \\
    & \times \exp{\bigg(-\sum_{i_t\in\mathcal V}\Lambda_t(i_t)\mathds 1_{\{x_t=i_t\}}(\mathbf x)\bigg)}.
\end{align}

Now, note that 
\begin{align*}
\exp{\bigg(-\sum_{i_0 \in \mathcal V}\Lambda_0(i_0)\mathds 1_{\{x_0=i_0\}}(\mathbf x)\bigg)}  = 
e^{-\Lambda_0(x_0)},
\end{align*}
and similarly,
\begin{align*}
\exp{\bigg( -\sum_{i_t \in \mathcal V}\Lambda_t(i_t)\mathds 1_{\{x_t=i_t\}}(\mathbf x) \bigg)}  = 
e^{-\Lambda_t(x_t)}.
\end{align*}
Thus, $P^\star (x_0,x_1,\ldots,x_t)$ is expressed as the product
\begin{align}\label{eq:followup}
Q(x_0,x_1,\ldots,x_t)\times \underbrace{f_1(x_0,x_1) \times \cdots \times f_{t}(x_{t-1},x_t)}_{F(x_0,x_1,\ldots, x_t)},
\end{align}
where $f_1(x_0,x_1)=e^{-\Lambda_0(x_0)-1}$ and $f_{t}(x_{t-1},x_t)=e^{-\Lambda_t(x_t)}$ are rank-$1$ tensors (since their values depend on one index) and all other factors are trivially equal to 1.
Our choice to display the factorization of the correction tensor $F$ in full draws a parallel and contrast to the next subsection. This provides a rather transparent derivation of the following well-known result that can be traced to its continuous counterpart \cite{Sch31,Sch32,chen2016relation} pertaining to diffusion processes.

\vspace{4pt}
\begin{theorem}\label{thm:firstmainthm}
    Assume that $Q$ is Markov and that Problem \ref{prob:classSBP} is feasible. Then, the unique (due to convexity) corresponding solution $P^\star $ is also Markov. \hfill{$\square$}
\end{theorem} 
\vspace{4pt}

\begin{proof}
Since the constraints are affine and the objective is strictly convex, strong duality holds, and the solution is in the form indicated in \eqref{eq:Pstar}, and therefore, it also satisfies the follow-up expression \eqref{eq:followup}.
    The product of tensors in the factorizations
    $q_1({x_0,x_1}) \times \cdots \times q_t({x_{t-1},x_t})$ and
    $f_1(x_0,x_1) \times \cdots \times f_t(x_{t-1},x_t)$ is again a tensor of the same form, i.e., a product of successive rank-$2$ tensors. Hence, the probability law $P^\star $ is Markov.
\end{proof}

\subsection{The SBP with Stopping Times}

The SBP with stopping times aims to reconcile marginals for random variables that are of a different nature. Specifically, these are marginals of $X_0$ and of the stopping times at the absorbing states $j\in \Va$. 
 Accordingly, we now seek a solution to the following problem.
\vspace{4pt}
\begin{problem}\label{prob:SBwithST}
 Given a prior law $Q$ on $\mathcal V^{t+1}$, determine a law $P^\star $ that satisfies \eqref{eq:SB1},
matches the prescribed starting probability distribution  $\hat \mu_0$ as in \eqref{eq:SB2}  with $\hat \mu_0(\Va) =0$ and allocates a pre-specified amount of probability mass at the absorbing states on the interval $[1,t]$ (see Eq.\ \eqref{eq:cumulative} below).  \hfill{$\square$}
\end{problem}
\vspace{4pt}
Thus, to contrast with classical SBPs, the terminal constraint \eqref{eq:SB3} is replaced by a running constraint on the absorbed mass in what follows.
Tautologically, the probability of paths being stopped at a site $j$ and at an instant that is no later than $\tau$ equals the cumulative probability of stopped walkers at the site $j$ up to the instant $\tau$. That is, for all $\tau \in [1,t]$ and $j \in \Va$, we have
\begin{align}\label{eq:cumulative}\tag{\ref{eq:SB}d}
    \sum_{\mathbf x \in \mathcal V^{t+1}} P({\mathbf x}) \mathds 1_{\{x_\tau=j\}}(\mathbf x) = \sum_{s=1}^\tau\hat \nu_j(s).
\end{align}

The augmented Lagrangian of  Problem \ref{prob:SBwithST} is
\begin{align*}
& \mathcal L_{\rm SBP}^{\rm stop}
(P,\Lambda_0,\Lambda) = \! \! \! \! \!  \! \!  \sum_{x_0,\ldots,x_t \in \mathcal V} \! \!  \bigg\{ P(x_0,\ldots, x_t)
\log \frac{P(x_0,\ldots, x_t)}{Q(x_0,\ldots, x_t)} \\
&  + \Lambda_0(x_0) P({x_0,\ldots,x_t})- \Lambda_0(x_0)\hat\mu_0(x_0) \\
&+  \sum_{j \in \Va} \!\sum_{\tau=1}^t 
\Lambda(\tau,j) \mathds 1_{\{x_\tau=j\}}({x_0,\ldots,x_t})P({x_0,\ldots,x_t}) \\
&  - \sum_{j \in \Va} \!   \sum_{\tau=1}^t 
\Lambda(\tau,j) \sum_{s=0}^\tau \hat \nu_j(s)\!\bigg\}.
\end{align*}
The Lagrange multipliers $\Lambda_0$ and $\Lambda$ enforce the constraints \eqref{eq:SB2} and \eqref{eq:cumulative}, respectively.  The first-order optimality condition reveals the optimizer's form as
\begin{align}
    P^\star (\mathbf x) = & Q(\mathbf x)  \times e^{-1} \times \exp{\bigg(-\sum_{i\in\mathcal V}\Lambda_0(i)\mathds 1_{\{x_0=i\}}(\mathbf x)\bigg)} \nonumber \\
    & \times \exp{\bigg(- \sum_{j \in \Va}\sum_{\tau=1}^t 
\Lambda(\tau,j) \mathds 1_{\{x_\tau=j\}}(\mathbf x)\bigg)},
    \label{eq:convenient3}
\end{align}
and this brings us to the following statement.

\vspace{4pt}
\begin{theorem} \label{thm:thm2}
    Assume that $Q$ is Markov and that  Problem \ref{prob:SBwithST} is feasible. Then, the unique (due to convexity) minimizer $P^\star $ is also Markov and given by \eqref{eq:convenient3}. \hfill{$\square$}
\end{theorem}
\vspace{4pt}

\begin{proof}
The solution $P^\star $ is in the form \eqref{eq:convenient3}; thus, it is the product of $Q$ with tensors that depend on the Lagrange multipliers. We only need to verify that each tensor factors into a succession of rank-2 tensors. In such case, the Markovian character \eqref{eq:Markov} of the prior will be preserved in $P^\star $.
As in the classical SBP case in Subsection \ref{sec:SBsection}, the term that involves $\Lambda_0$, namely,
    \begin{align*}
       \exp{\bigg(-\sum_{i\in\mathcal V}\Lambda_0(i) \mathds 1_{\{x_0=i\}}(\mathbf x)\bigg)} = e^{-\Lambda_0(x_0)},
    \end{align*}
    is a rank-$1$ tensor as before, and $e^{-1}$ is just a scaling factor. In turn, their contributions are of the required factored form. The new element here is
 the last factor in \eqref{eq:convenient3}, which we write as
    \begin{align*}
        \exp{\bigg(\! \!- \! \!\sum_{j \in \Va}\! \sum_{\tau=1}^t 
\Lambda(\tau,j) \mathds 1_{\{x_\tau=j\}}(\mathbf x)\bigg)}=\! \!\prod_{j \in \Va}\! \! {F^j(x_0,\ldots, x_t)},
    \end{align*}
    for some factors $F^j$, one for each absorbing state, of the form
    \begin{align*} F^j(x_0,\ldots,x_t)&= f_1^j(x_0,x_1) \times \cdots \times f_t^j(x_{t-1},x_t),
    \end{align*}
    where this next layer of factors are all rank-$1$ tensors, scaling only a corresponding direction $x_\tau$ as follows.
    \begin{align*}
f_\tau^j(x_{\tau-1},x_\tau)&=\exp{\big(-\Lambda (\tau,j)\mathds 1_{\{x_\tau=j\}}(\mathbf x)\big)}\\
&=\left\{\begin{array}{ll}
e^{-\Lambda (\tau,j)} & \text{ if }x_\tau=j,\\
1 & \text{ otherwise}.
\end{array}\right.
    \end{align*}
 In other words, 
  $f^j_\tau$ can be viewed as a matrix with all entries equal to $1$ except for the $j$-th column that is scaled\footnote{Once again, although the value of $f_\tau^j$ only depends on one coordinate, we choose to display it as a $2$-dimensional tensor to highlight its Markov-like factorization. An alternative equivalent factorization is possible where adjacent factors split the contribution of $e^{-\Lambda(\tau,j)}$ in half between them, with one having a column scaled by $e^{-\Lambda(\tau,j)/2}$ and the other a corresponding row by the same scaling.} by $e^{-\Lambda (\tau,j)}$. 
  This product structure preserves the Markovian property, which completes the proof.
\end{proof}

    \vspace{4pt}
\begin{remark}
    Theorem \ref{thm:thm2} holds when 
    \begin{align*} 
       \sum_{j \in \Va} \sum_{\tau=1}^t  \hat \nu_j(\tau) \leq 1,
    \end{align*}
     implying that a subset (strict or not) of the random walkers arrive at the absorbing destinations by the end of the time interval $[1,t]$. Equality is not necessary for the validity of the theorem, and when strict inequality holds, the remaining probability mass is distributed on the transient states. In other words, a subpopulation of random walkers have not reached an absorbing site and are still marching inside the network. In this case, the posterior $P^\star$ is to estimate the occupancy of these transient states at time $t$, which signifies the end of the interval over which observations are made over the absorbing states. \hfill{$\square$}
\end{remark}
\vspace{4pt}

 To sum up, this section showed that when the prior law on paths is Markov, the {\em most likely} posterior is also Markov for both types of problems: the classical SBP and the SBP with stopping times. 
This justifies the parametrization used next in the search for the optimal solution.  

\section{Transition rates \& spatio-temporal control}\label{sec:SBP}

It is a welcomed surprise that an update of the prior law, compatible with specifications on the initial probability vector and cumulative stopping-time probabilities at a collection of sites, shares the Markovian character of the prior as established by Theorem \ref{thm:thm2}. From the estimation perspective, this means that given a prior random walk on the graph $\mathcal G$ with a corresponding law $Q$ on sample paths and spatio-temporal marginals,  these marginals are optimally realizable via a posterior random walk with the corresponding law $P^\star$ in \eqref{eq:convenient3}. This update of the prior law can be interpreted as both the solution of a maximum likelihood problem \`a la Schr\"odinger, as well as the solution of a control problem to \textit{steer} stochastic flows on the graph to satisfy spatio-temporal marginals that are suitably specified.

The salient feature of our approach is that the prior law on paths, realized by a sequence of transition probabilities since it is Markov, provides a reference for the stochastic evolution on the graph. 
Then, the most likely posterior $P^\star$ in \eqref{eq:convenient3} assigns accordant amounts of probability mass on sets of space-time indices and is closest to the given prior law  $Q$ in the relative entropy.
This posterior will be obtained by adjusting the prior transition probabilities where such an adjustment at each time instant represents the desired {\em control action}. 
 At first glance, it may not have been clear that steering stochastic flows on the graph toward stopping-time marginals does not require dependency on, e.g., the full path or partial history of states, which would have implied that the required control action, say, at instant $k$, is some function of, e.g., $(x_0,x_1, \ldots, x_k)$. However, Theorem \ref{thm:thm2} further implies our steering control task is to specify a suitable protocol for obtaining new transition probabilities via reweighing ({\em rescaling}) prior Markov transitions analogous to a state-feedback input to diffusion processes. This is done next.

\subsection{Problem Formulation}

 In this section and the following, we take the set $[1,m] \subset \mathbb N$ to index elements in $\Va$ and $[m+1,m+n] \subset \mathbb N$ to index elements in $\Vt$ for convenience. We set the prior Markov law on $\mathcal V^{t+1}$ as
\begin{subequations} \label{eq:priorlaw}
\begin{align} \label{eq:priorlaw1}
\! \!    Q(\mathbf x) = \mu_0(x_0) \Pi_1(x_0,x_1) \Pi_2(x_1, x_2) \cdots \Pi_t( x_{t-1},x_t),
\end{align}
where $\mu_0$ is not necessarily equal to $\hat \mu_0$ and $\Pi_\tau$ denotes a transition matrix on $\mathcal V \times \mathcal V$. 
 Namely, 
\begin{align*}
    \Pi_\tau(x_{\tau-1},x_\tau): = Q(X_\tau = x_\tau \mid X_{\tau-1}=x_{\tau-1})
\end{align*} and $\Pi_\tau$ is a row-stochastic matrix indexed by time\footnote{Our choice to consider a time-varying kernel facilitates comparison with the posterior that typically has a time-varying structure.} $\tau$. We also set  
\begin{align} \label{eq:priormat}
    \Pi_\tau := \begin{bmatrix}
        I & \mathbf 0  \\
        B_\tau & A_\tau
    \end{bmatrix},
\end{align}
\end{subequations}
where $I$ and $\mathbf 0$ denote the identity and the zero matrices\footnote{Throughout, the dimensions of $I$ and $\mathbf 0$ will be implied from the context; e.g., here $I=I_{m}$ and $\mathbf 0=\mathbf 0_{m \times n}$.}, respectively, and where
\begin{align*}
    B_\tau(x,y) := \Pi_\tau(x,y), \ \ \text{ and }  \ \ A_\tau(x,z) := \Pi_\tau(x,z),
\end{align*}
 for all $x,z \in \Vt$ and $y \in \Va$.  Once again, from Theorem \ref{thm:thm2}, we know that the solution to Problem \ref{prob:SBwithST} is Markov. Hence, our aim now is to construct explicitly the optimal transition kernel $(\Pi_\tau^\star)_{\tau \in [1,t]}$ realizing the most likely law $P^\star $ with $\hat \mu_0$  and $\{\hat \nu_1, \hdots, \hat \nu_m\}$ as marginals. Hence, 
 our task is to search for the optimal $P^\star$ in all laws of the form
 \begin{subequations}\label{eq:MLaw}
    \begin{align}\label{eq:MLaw1}
\! \!   P(\mathbf x) =  \hat \mu_0(x_0) \hat \Pi_1 (x_0,x_1) \hat \Pi_2 (x_1,x_2) \cdots \hat \Pi_t{( x_{t-1}},x_t),
\end{align}
where each $\hat \Pi_\tau$ is also row stochastic, and similar to $\Pi_\tau$, 
\begin{align} \label{eq:postmat}
    \hat \Pi_\tau := \begin{bmatrix}
        I & \mathbf 0  \\
       \hat B_{\tau} & \hat A_{\tau}
    \end{bmatrix}
\end{align}
 is a transition matrix on $\mathcal V \times \mathcal V$ with
\begin{align*}
   \hat B_\tau (x,y) := \hat \Pi_\tau(x,y), \ \ \text{ and }  \ \   \hat A_\tau (x,z) := \hat \Pi_\tau(x,z),
\end{align*}
\end{subequations}
for all $x,z \in \Vt$ and $y \in \Va$. The problem is set as follows.

\vspace{4pt}
 \begin{problem}\label{prob:KL-paths-mult}
     Consider a prior Markov law $Q$ on paths in $\mathcal V^{t+1}$ is given as in \eqref{eq:priorlaw} together with a probability (column) vector $\hat\mu_0  \in \mathbb R_+^{n+m}$ with support in $\Vt\subset \mathcal V$
    and a set of (row) vectors $\{\hat \nu_j  \in \mathbb R_+^{t} \mid j \in  [1,m]\}$
    as marginals of first-arrival (stopping) times at $\Va$ and each with support in $[1,t]$ such that
     \begin{align} \label{eq:ineq}
     \sum_{j=1}^m \sum_{\tau=1}^t\hat \nu_j(\tau)\leq 1.
     \end{align}
     Determine the transition kernel $(\Pi_\tau^\star)_{\tau \in [1,t]}$ for the Markov law
     \begin{subequations}
        \begin{align}\label{eq:likelihoodfn}
       P^\star  := \arg \min_{P \ll Q}   \sum_{\mathbf x \in \mathcal V^{t+1}}  P(\mathbf x) \log \frac{P(\mathbf x)}{Q(\mathbf x)},
        \end{align}
     with $P$ of the form \eqref{eq:MLaw} and subject to 
        \begin{align}
    \sum_{x_1,\ldots,x_t \in \mathcal V} P(x_0,x_1,\ldots, x_t) & = \hat \mu_0(x_0), \label{eq:constr1}\\
  \text{and }  \sum_{\mathbf x \in \Xt}P(\mathbf x) & = \hat \nu_j(\tau) \label{eq:constr2},
        \end{align}
        \end{subequations}
     for any $x_0 \in \mathcal V, \tau \in [1,t], j \in [1,m]$, and
     \begin{align}\label{eq:setofpaths}
     \! \! \! \! \Xt:=\{ \mathbf x \in \mathcal V^{t+1} \mid  x_0,\ldots,x_{\tau-1}\neq j, x_\tau=j\}
     \end{align}
     is the set of paths that first arrive at $j$ at instant $\tau$.   \hfill{$\square$}
     \end{problem}
\vspace{4pt}

 We proceed with technical propositions that help us build the solution. At first, we consider a problem of reconciling marginals that can be partially specified, similar to the case in \eqref{eq:ineq} (Section \ref{sec:I}).  This shall serve as a step to solve Problem \ref{prob:KL-paths-mult}. It is followed by obtaining the form of the optimal Markov kernel $(\Pi_\tau^\star)_{\tau \in [1,t]}$ that builds the optimizer $P^\star$ (Section \ref{sec:II}). We conclude with an explicit verification that the solution we provide is optimal (Section \ref{sec:III}). The explicit construction of the Markov law provides an alternative, independent argument that proves Theorem \ref{thm:thm2}. 
The transition kernel of the optimal Markov protocol can be employed to steer random walkers to fulfill the prescribed marginals $\hat \mu_0$ and $\{\hat \nu_1,\ldots, \hat \nu_m\}$.

\subsection{Building  Space-Time Bridges I:  Reconciling Marginals}\label{sec:I}
 Consider a {\em transition} probability matrix $\varPi \in \mathbb R_+^{n \times (r+\ell)}$, partitioned
as
\begin{align*}
 \varPi =\begin{bmatrix}
    \cB, & \cA
\end{bmatrix},
\end{align*}
where $\cB \in \mathbb R_+^{n \times r}$ and $\cA \in \mathbb R_+^{n \times \ell}$.
The row stochasticity of $\varPi$, $\sum_{y=1}^{r+\ell} \varPi(x,y)=1,$
will be compactly expressed as
$\varPi \mathds 1=\mathds 1$,
where $\mathds 1$ is a (column) vector of $1$'s of suitable size\footnote{Thus, when writing $\varPi\mathds{1}=\mathds{1}$, in the first occurrence $\mathds 1\in \mathbb R^{r+\ell}$, and in the second, $\mathds 1\in \mathbb R^{n}$.}.

Consider now that we are given a marginal {\em probability} (column) vector $\hat\mu \in \mathbb R_+^n$, and another marginal (column) vector $ \hat \nu \in  \mathbb R_+^r$. The latter can be {\em partially specified}. That is, we are given a non-negative vector $\hat \nu$ with $\sum_{y=1}^r \hat \nu(y)\leq 1$. Without loss of generality, we may assume that 
$$\hat \mu^\intercal \cB \neq \hat \nu^\intercal,$$
with  $^\intercal$ being {\em transpose}, and the task is to determine the transition probability matrix $\varPi^\star$, partitioned similarly as 
\begin{align*}\varPi^\star= \begin{bmatrix}
     \cB^\star, & \cA^\star
\end{bmatrix}\end{align*}
so that $$\hat \mu^\intercal \cB^\star  = \hat \nu^\intercal,$$
following Schr\"odinger's dictum. 
Thus, we seek the solution to the following problem.

\vspace{4pt}
\begin{problem}\label{prob:partialSB}
     Given two marginals  $\hat \mu \in \mathbb R_+^n$ and $\hat \nu \in \mathbb R_+^r$ such that $\hat \mu^\intercal \mathds 1 = 1$, and $\hat \nu^\intercal \mathds 1 \leq 1$, determine
    \begin{align*}
      \varPi^\star := {\rm arg}\min_{\hat \varPi \ll \varPi} &
    \sum_{x =1}^n \sum_{y =1}^{r+\ell} \hat \mu(x)\hat \varPi(x,y)\log \frac{\hat \varPi(x,y)}{\varPi(x,y)}, \\
   \text{subject to } \hat \mu^\intercal \hat \cB &= \hat \nu^\intercal,  \ \ \text{and }  \ \ \hat \varPi\mathds{1} =\mathds{1},
    \end{align*}
    where $\hat\varPi= \begin{bmatrix} \hat \cB, & \hat \cA \end{bmatrix}$, $\hat \cB \in \mathbb R_+^{n \times r}$, and $\hat \cA \in \mathbb R_+^{n \times \ell}$.   \hfill{$\square$}
\end{problem}
\vspace{4pt}

 Problem \ref{prob:partialSB} at first glance seems quite abstract; however, it will be used as a step in solving Problem \ref{prob:KL-paths-mult}. In that, we relate the marginals $\hat \mu$ and $\hat \nu$ to our marginals $\hat \mu_0$ and $\{\hat \nu_1, \ldots, \hat \nu_m\}$ and $\varPi$ and $\varPi^\star$ with the respective Markov kernels of $Q$ and $P^\star$. To see this, consider $r=m$, $\ell =n$,
\begin{align*}
    \hat \mu = \hat \mu_0(m+1:m+n)  \  &, \   \hat \nu^\intercal = \big [\hat\nu_1(1),\ldots, \hat\nu_m(1) \big], \\
    \cB = B_1,  \ \ \ \ & \text{and } \ \ \ \  \cA = A_1,
\end{align*}
then Problem \ref{prob:partialSB} is identical to Problem \ref{prob:KL-paths-mult} but for a single time step, i.e., for $t=1$. In subsequent sections, we shall consider the specific case in which we take 
\begin{align*}
\cB=[B_1,\, A_1B_2,\ldots,\,(A_1\cdots A_{t-1})B_t], \ \text{ and } \ \cA= A_1\cdots A_t.
\end{align*}
 Then, it will hold that $r = t m$ and $\ell=n$, but here we let $r$ and $\ell$ be arbitrary for generality. Incidentally, if the prior kernel is time-invariant, then \begin{align*}
 \cB=[B,\, AB,\ldots,\, A^{t-1}B],  \ \text{ and } \ \cA=A^t, \end{align*}
 which are simpler and perhaps seem more familiar.

    \vspace{4pt}
    \begin{remark}
     The problem of reconciling marginals that are partially specified on both ends can be treated similarly. However, for simplicity, we refrain from this additional layer of generality as it will not be needed in the sequel. \hfill{$\square$}
\end{remark}
\vspace{4pt}

Assuming feasibility, we readily obtain the form of the minimizer of Problem \ref{prob:partialSB} by considering the  augmented Lagrangian
\begin{align*}
    \mathcal L =&  \sum_{x=1}^n \hat \mu (x)  \bigg[ \ 
\sum_{y=1}^r \hat \cB(x,y)
\log \frac{\hat \cB(x,y)}{\cB(x,y)} \\
&+ \sum_{y=1}^\ell  \hat \cA(x,y)
\log \frac{\hat \cA(x,y)}{\cA(x,y)} \bigg]\\
&+  \sum_{y=1}^r \lambda(y) \bigg[\sum_{x=1}^n \hat \mu(x) \hat \cB(x,y) - \hat \nu(y)\bigg]   \\
&+ \sum_{x=1}^n d(x) \bigg[\sum_{y=1}^r  \hat \cB(x,y) + \sum_{y=1}^\ell \hat \cA(x,y) -1 \bigg], 
\end{align*}
with  $\lambda$ and $d$ being Lagrange multiplier vectors. 
Setting the partial derivatives with respect to the entries of $\hat \cB$ and $\hat \cA$ to zero gives that the minimizer is such that
\begin{align*}
     \cB^\star (x,y) &=\cB(x,y) e^{\mathlarger{-1-\frac{d(x)}{\hat \mu (x)}}} e^{-\lambda(y)},\\
  \text{and }   \cA^\star(x,y) & =\cA(x,y) e^{\mathlarger{-1-\frac{d(x)}{\hat \mu (x)}}}.
\end{align*}
Thus, the Lagrange multipliers effectively scale the two matrices $\cB$ and $\cA$
by the diagonal matrices (diagonal scaling)
 \begin{align*}
 \mathbf D_0 = {\rm diag}(D_0), \ \ \ \ & \text{ and } \ \ \ \ \mathbf \Lambda={\rm diag}(\Lambda), \\
  \text{with } \ \   D_0(x) := e^{\mathlarger{-1-\frac{d(x)}{\hat \mu (x)}}}, \ \ & \text{ and } \ \  \Lambda(y) :=e^{-\lambda(y)},
\end{align*}
where both $\cB$ and $\cA$ are multiplied on the left by $\mathbf D_0$ and only $\cB$ on the right by $\mathbf \Lambda$. 
Furthermore,  as stated in the following proposition, these diagonal scalings are uniquely determined iteratively by a {\em Sinkhorn}-type algorithm.

\vspace{4pt}
\begin{proposition}\label{prop:partial}
    The solution to Problem \ref{prob:partialSB}, assuming feasibility, is
$  \varPi^\star = \begin{bmatrix}
     \cB^\star, & \cA^\star    \end{bmatrix} $
    such that
    \begin{align}
        \cB^\star=\mathbf D_0 \cB\mathbf \Lambda, \ \ \text{ and }     \ \    \cA^\star = \mathbf D_0 \cA,
    \end{align}
    for some  nonnegative diagonal matrices $\mathbf D_0$ and $\mathbf \Lambda$.
    The column vectors  $D_0 \in \mathbb R_{++}^n$ and $\Lambda \in \mathbb R_+^r$ respectively corresponding to $\mathbf D_0$ and $\mathbf \Lambda$ can be obtained as the
    limits to the following Sinkhorn iteration:
\begin{subequations}\label{eq:Sinkhorn}
\vspace{4pt}
    \begin{align} 
    \Lambda^{(k)} & \mapsto  D_0^{(k)} =  \mathds 1 \oslash (\cB \Lambda^{(k)} + \cA \mathds 1), \\
    D_0^{(k)} & \mapsto \Lambda^{(k+1)}  = \hat \nu \oslash (\cB^\intercal  ({\hat \mu} \odot D_0^{(k)})), 
    \end{align}
    \end{subequations}
    carried out for $k=1,2,\ldots$ and initialized by taking $\Lambda^{(1)}=\mathds 1$, and where  $\odot,\oslash$ denote entry-wise multiplication and division of vectors, respectively. \hfill{$\Box$}
\end{proposition}
\vspace{4pt}
\begin{proof}
    The proof follows from the preceding discussion of the proposition and standard arguments, as in \cite{georgiou2015positive}, \cite[Section 4.2]{peyre2019computational}, noting that map in \eqref{eq:Sinkhorn} comprises either contractive or isometric maps in the Hilbert metric.
\end{proof}

\subsection{Building  Space-Time Bridges II:  Obtaining the Optimal Markov Kernel}\label{sec:II}
 Herein, we utilize the method in the previous subsection to solve the general Problem \ref{prob:KL-paths-mult} by addressing the case where $r = tm$ and $\ell = n$. For that, we now consider the Markov model \eqref{eq:priorlaw} and write the prior probability kernel of the random walk $(X_\tau)_{\tau \in [0,t]}$
transitioning to
\begin{align*}
\underbrace{\Va \times \Va\times \cdots \times \Va}_{\text{stopped process over} \ [1,t]}\times \Vt,
\end{align*}
 given that it starts from  $\Vt$ via a {\em telescopic} expansion of successive transitions:
\begin{align*} 
\begin{bmatrix}
    B_1, & A_1 B_2, &\ldots, & (A_1 \cdots A_{t-1})B_t, & (A_1\cdots A_t)
\end{bmatrix}.
\end{align*}
Thus, e.g., the $(x,y)$-entry of $A_1 B_2$ quantifies the probability that the random walk first arrives at the absorbing vertex $y$ at $\tau=2$  given that it starts at the transient vertex $x$, i.e., this entry equals $Q(X_T = y, T=2 \mid X_0 = x)$, and so on.
Then, $(A_1\cdots A_t)$ is the transition matrix into the transient states for walkers that have avoided being absorbed over the whole window $[1,t]$.

We consider as data  the marginal distribution $\hat \mu$ on $\mathcal \Vt$ for the random walk at time $\tau=0$  constructed from $\hat \mu_0$ such that
\begin{align} \label{eq:mu}
    \hat \mu = \hat \mu_0(m+1:m+n)
\end{align}
and the {\em spatio-temporal} marginal constructed from $\{\hat \nu_1, \ldots, \hat \nu_m\}$ such that
\begin{align}
&\hat \nu^\intercal = \nonumber \\
& \ \ \big[\underbrace{\hat\nu_1(1),\ldots, \hat\nu_m(1)}_{T=1}, \underbrace{\hat\nu_1(2),\ldots,\hat\nu_m(2)}_{T=2}, \ldots, \underbrace{\hat\nu_1(t),\ldots, \hat\nu_m(t)}_{T=t}\big], \label{eq:v}
\end{align}
 The vector $\hat \nu \in \mathbb R_+^{tm}$ encapsulates our information on first-time arrivals at the absorption vertices over the interval $[1,t]$. We then  set $\varPi$ as the telescopic expansion and partition it into two submatrices such that
\begin{align*} 
\varPi= \bigg[
    \underbrace{B_1 , \ \ \ A_1 B_2, \ \ \  \ldots, \ \ \  (A_1 \cdots A_{t-1})B_t}_{\cB}, \ \ \  \underbrace{(A_1\cdots A_t)}_{\cA}
\bigg],
\end{align*}
and apply Proposition \ref{prop:partial} directly while recalling in this case $r = tm$ and $\ell =n$. We deduce that there exists a unique pair of diagonal scaling:  $\mathbf D_0 \in \mathbb R_+^{n \times n}$ on the left and
\begin{align*}
\mathbf \Lambda = {\rm diag}(\mathbf \Lambda_1, \ldots, \mathbf \Lambda_t) \in  \mathbb R_+^{tm \times tm}
\end{align*}
on the right  with $\mathbf \Lambda_1, \ldots, \mathbf \Lambda_t \in \mathbb R_{+}^{m \times m}$ so that
the solution to Problem \ref{prob:partialSB} is of the form
\begin{align}
\varPi^\star =    \mathbf D_0 \varPi \begin{bmatrix}\mathbf \Lambda_1& \mathbf 0 &\ldots &\mathbf 0 &\mathbf0\\
\mathbf 0 & \mathbf \Lambda_2 & \ldots & \mathbf 0& \mathbf 0\\
\vdots &\vdots&\ddots&\vdots& \vdots\\
\mathbf0 & \mathbf 0& \ldots & \mathbf \Lambda_t &\mathbf 0\\
\mathbf 0 & \mathbf 0 & \ldots & \mathbf 0 & I
\end{bmatrix}\label{eq:hatpiscalings}.
\end{align}

In \eqref{eq:hatpiscalings}, $\mathbf \Lambda$ is further partitioned in a block diagonal structure, with the blocks being themselves diagonal, conformally with the finer structure of $\cB$. Therefore, we write  $\varPi^\star$ as
\begin{align*}
\begin{array}{cccccc}\bigg[
\mathbf D_0B_1\mathbf \Lambda_1, & \mathbf D_0A_1 
\overbrace{\begin{bmatrix}
    B_2\mathbf \Lambda_2,& A_2B_3\mathbf \Lambda_3,& \ldots, & (A_2\cdots A_t)
\end{bmatrix}}^{\cC_{2:t+1}}\bigg].
\end{array}
\end{align*}
Our goal is to show that  $\varPi^\star$ arises as a transition probability of a Markov law. For this reason, in the above listing of the block entries of $\varPi^\star$, we see a nested structure and accordingly define $\cC_{2:t+1}$ as in the displayed equation above. 

The scalings in Proposition \ref{prop:partial} ensure the solution is row stochastic, i.e., $\varPi^\star \mathds 1 = \mathds 1$. 
It is important to note that $\cC_{2:t+1}$ may not be row stochastic. Therefore, we determine
the diagonal scaling of $\cC_{2:t+1}$ to ensure it becomes row stochastic.
To this end, let $\cC_{2:t+1}\mathds 1 =  D_1$, which is not necessarily equal to $\mathds 1$. Then, let  $\mathbf D_1={\rm diag}(D_1)^\sharp$, where $^\sharp$ denotes the Moore-Penrose generalized inverse that coincides with the standard inverse if no entries of $D_1$ vanish and rewrite $\varPi^\star$ as
\begin{align*}
\begin{array}{cccccc}\bigg[
\mathbf D_0B_1\mathbf \Lambda_1, & \mathbf D_0A_1\mathbf D_1^{\sharp} 
\overbrace{\begin{bmatrix}
    \mathbf D_1 B_2\mathbf \Lambda_2,& \ldots, & \mathbf D_1 A_2\cdots A_t
\end{bmatrix}}^{\mathbf D_1\cC_{2:t+1}}\bigg].
\end{array}
\end{align*}
Then, the following matrix is row stochastic:
\begin{align*}
\begin{bmatrix}
\mathbf D_0 B_1\mathbf \Lambda_1, & \mathbf D_0 A_1\mathbf D_1^{\sharp} \end{bmatrix}.
\end{align*}\

In the same way, by considering this telescopic expansion, we construct the sequence of $n \times n$ diagonal scalings $\mathbf D_1,\mathbf D_2,\ldots, \mathbf D_t $ via
\begin{align}\label{eq:ds}
\mathbf D_\tau={\rm diag}(D_\tau)^{\sharp} \ \ \text{ for } \ \ D_\tau=\cC_{\tau+1:t+1}\mathds{1}
\end{align}
so that
\begin{align*}
\varPi^\star = \bigg[
      \underbrace{B_1^\star, \ \ A_1^\star B_2^\star, \ \  \ldots, \ \  (A_1^\star \cdots  A_{t-1}^\star)  B_t^\star}_{\cB^\star},  \ \   \underbrace{(A_1^\star \cdots A_t^\star)}_{\cA^\star}
\bigg], 
\end{align*}
with
\begin{equation}\label{eq:hats}
\begin{split}
    B^\star_\tau &= \mathbf D_{\tau-1}B_\tau\mathbf \Lambda_\tau, \ \ \text{ and } \ \     A^\star_\tau =\mathbf D_{\tau-1}A_\tau \mathbf D_\tau^{\sharp},
\end{split}
\end{equation}
for all $\tau\in [1,t]$.  For completeness, we herein list the sequence of matrices used in \eqref{eq:ds}:
   \begin{align} \label{eq:cs}
\begin{split}
 \cC_{2:t+1} &= \begin{bmatrix}
    B_2\mathbf \Lambda_2, & A_2 B_3\mathbf \Lambda_3, & \ldots, & (A_2\cdots A_t)
\end{bmatrix}, \\
\cC_{3:t+1}  &= \begin{bmatrix} B_3 \mathbf \Lambda_3, & \ldots, & (A_3\cdots A_t)
\end{bmatrix},\\
& \  \vdots  \\
\cC_{t+1:t+1} &= I.
\end{split}
\end{align}%
Equations  \eqref{eq:ds}-\eqref{eq:cs} provide the parameters of the sought solution
to Problem \ref{prob:KL-paths-mult}. We first recap that these provide a Markov law that is consistent with the marginals, and in the next section, we argue that it, in fact, minimizes
the likelihood functional in \eqref{eq:likelihoodfn}.

\vspace{4pt}
\begin{proposition}\label{prop:structureoflaw}
       The Markov law \eqref{eq:MLaw1} with transition kernel \eqref{eq:postmat} such that $\hat B_\tau=B_\tau^\star$, and $\hat A_\tau=A_\tau^\star$ as in \eqref{eq:hats}
       is consistent with the given marginals $\hat \mu_0, \{\hat \nu_1, \ldots,\hat \nu_m\}$  of Problem \ref{prob:KL-paths-mult}. \hfill{$\Box$}
\vspace{4pt}
\end{proposition}

\begin{proof}
    The proceeding arguments in this section have established the proposition. Indeed, $\varPi^\star$ is of the required form
    \begin{align*}
     \begin{bmatrix}
        B^\star_1, & A_1^\star B_2^\star, & \ldots, & (A_1^\star\cdots A_{t-1}^\star)B_t^\star,& (A_1^\star \cdots A_t^\star) \end{bmatrix},
    \end{align*}
    with $\begin{bmatrix}
        B_\tau^\star, & A_\tau^\star \end{bmatrix}$ row stochastic for all $\tau \in [1,t]$, then the law with the kernel 
        \begin{align*}
            \Pi_\tau^\star = \begin{bmatrix}
                I & \mathbf 0 \\
                 B_\tau^\star, & A_\tau^\star
            \end{bmatrix}
        \end{align*}
        that forms $\varPi^\star$ satisfies \eqref{eq:constr1}. By taking  $\hat \mu$ as in \eqref{eq:mu}, and from \eqref{eq:hats} and Proposition \ref{prop:partial}, we deduce
    \begin{align*}
    \hat\mu^\intercal\begin{bmatrix}
        B^\star_1, & A_1^\star B_2^\star, & \ldots,& (A_1^\star \cdots A_t^\star) \end{bmatrix}=\hat\nu^\intercal,
    \end{align*}
    with $\hat \nu^\intercal$ is that of \eqref{eq:v} as required.  As a consequence, the law with the kernel forming $\varPi^\star$ also satisfies \eqref{eq:constr2}.
\end{proof}

\subsection{Optimality of the Law}\label{sec:III}

We are now in a position to establish our central result. That is, the Markov law identified in Proposition \ref{prop:structureoflaw} is, in fact, the optimal solution claimed in Theorem \ref{thm:thm2} and sought for in Problem \ref{prob:KL-paths-mult}. It is important to note that so far, we have only established that if a law has a kernel coinciding with \eqref{eq:hats}, it is consistent with the given marginals. We still need to show how such a law weighs in on individual paths and that it is optimal in that it minimizes the relative entropy functional in \eqref{eq:SB1} or \eqref{eq:likelihoodfn}. As it turns out, this is a consequence of the diagonal structure of scalings.

The key to seeing this is to observe first that the prior law $Q$ and
the posterior $P^\star$ share the same  pinned (Brownian) bridges, i.e.,
 the probability of a stopped path (recall \eqref{eq:setofpaths}) conditioned on a start at some $x_0 \in \Vt$ and on first absorption at some $x_\tau\in \Va$ is the same under $Q$ or $P^\star$. Likewise, the probability of an unstopped path conditioned on a start at some $x_0 \in \Vt$ and on an end at some $x_t\in \Vt$ is the same under either law. This is shown next.

\vspace{4pt}
\begin{proposition}\label{prob:disint1}
      Consider the Markov laws $Q$ and $P$ in \eqref{eq:priorlaw1} and \eqref{eq:MLaw1} with transition kernels \eqref{eq:priormat} and \eqref{eq:postmat}, respectively. If the law $P$ is such that $\hat B_\tau=B^\star_\tau$, and $\hat A_\tau=A^\star_\tau$ as in \eqref{eq:hats}, then 
        \begin{align*}
           &\! \! P(\mathbf x \mid X_0 = x_0,X_T=j, T=\tau) =  \nonumber \\
            &\hspace{70pt} Q(\mathbf x \mid X_0 = x_0,X_T=j, T=\tau), \\
          &\text{and } P(\mathbf x \mid X_0 = x_0, X_t =x_t) =  Q(\mathbf x \mid X_0 = x_0, X_t =x_t),
        \end{align*}
        for any $\mathbf x \in \mathcal V^{t+1}$, $x_0, x_t \in \Vt, j \in \Va$ and $\tau \in [1,t]$.  \hfill{$\Box$}
\end{proposition}
\vspace{4pt}

\begin{proof}
    In general,
    \begin{align}\label{eq:Qconditioned}
    & Q(\mathbf x \mid X_0 = x_0,X_T=j, T=\tau)= \nonumber \\
    &\hspace{70pt} \frac{1}{Z}   \Pi_1(x_0,x_1) \cdots \Pi_\tau (x_{\tau-1},j)
    \end{align}
    where $x_0, \ldots, x_{\tau-1} \in \Vt, j \in \Va, \tau \in [1,t]$, and 
    \begin{align*}
        Z = \sum_{x_1,\ldots,x_{\tau-1} \in \Vt} \Pi_1(x_0,x_1)\cdots \Pi_\tau(x_{\tau-1},j)
    \end{align*}
is a normalizing factor, assumed nonzero.
Bringing in the form of the kernels in \eqref{eq:priormat}, we have that $Q(\mathbf x \mid X_0 = x_0,X_T=j, T=\tau)$ from \eqref{eq:Qconditioned} is proportional to
\begin{align*}
A_1(x_0,x_1)\cdots A_{\tau-1}(x_{\tau-2},x_{\tau-1})B_\tau(x_{\tau-1},j).
\end{align*}

From \eqref{eq:hats}, we obtain that $P(\mathbf x \mid X_0 = x_0,X_T=j, T=\tau)$ is proportional to
\begin{align*}
D_0(x_0) A_1(x_0,x_1)\!\cdots\! A_{\tau-1}(x_{\tau-2},x_{\tau-1})B_\tau(x_{\tau-1},j)\Lambda_\tau(j).
\end{align*}
Since $x_0$ and $j$ are fixed, $D_0(x_0)\Lambda_\tau(j)$ is independent of $(x_1,\ldots,x_{\tau-1})$. Thus, the probability of a path starting from $x_0$ and first reaching $\Va$ at $\tau$ and at a particular $j$ is identical under the two laws.

Similarly, if we pin the bridge to some values $x_0\in\Vt$ and $x_t\in\Vt$ at the two endpoints, the probability mass on any particular path, while also a function of the path, is once again the same under the two laws. The verification is identical.
\end{proof}
\vspace{4pt}
We proceed with the section by stating and proving our central result.\\

\begin{theorem}
    The Markov law \eqref{eq:MLaw1} with transition kernel \eqref{eq:postmat} such that $\hat B_\tau=B_\tau^\star$, and $\hat A_\tau=A_\tau^\star$ as in \eqref{eq:hats} 
is the unique solution $P^\star$ to Problem \ref{prob:KL-paths-mult} assuming feasibility.     \hfill{$\Box$}
\end{theorem}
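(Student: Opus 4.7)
The plan is to combine the explicit construction of \eqref{eq:hats} with the Pythagorean identity for Kullback--Leibler divergence. The strategy rests on showing that $\log(P^\star(\mathbf x)/Q(\mathbf x))$ decomposes additively into a function of the initial state $x_0$ plus a function of the first-arrival outcome $(T,J)$, so that $\mathbb E_P[\log(P^\star/Q)]$ depends on any feasible $P$ only through the marginals $\hat\mu_0$ and $\{\hat\nu_j(\tau)\}$.

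The first step is to establish this factorization by direct telescoping in \eqref{eq:hats}. Along any path $\mathbf x\in\Xt$ with first arrival at $j$ at time $\tau$, the scalings $A_s^\star=\mathbf D_{s-1} A_s \mathbf D_s^{-1}$ cause adjacent diagonal factors to cancel in pairs, leaving only $D_0(x_0)$ from the start and $\Lambda_\tau(j)$ from the terminal $B_\tau^\star=\mathbf D_{\tau-1}B_\tau\mathbf \Lambda_\tau$; the identity blocks of \eqref{eq:postmat} on $\Va$ contribute $1$ after absorption. For a path remaining in $\Vt$ throughout $[0,t]$, the same telescoping, together with the endpoint convention $\mathbf D_t=I$ (since the window $[t+1,t]$ is empty, so $d_t=\mathds 1$), collapses everything to $D_0(x_0)$. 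Combined with the initial ratio $\hat\mu_0(x_0)/\mu_0(x_0)$, this gives
\[
\log \frac{P^\star(\mathbf x)}{Q(\mathbf x)} \;=\; \alpha(x_0) + \beta(T,J),
\]
with $\beta\equiv 0$ on the ``not-absorbed'' outcome. Much of this bookkeeping is already embedded in Proposition~\ref{prob:disint1}, whose proof isolates precisely the scalar discrepancy $D_0(x_0)\Lambda_\tau(j)$ between the two laws on bridges with fixed endpoints.

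The second step is to invoke the standard identity
\begin{align*}
\mathbb D(P\|Q) \;=\; \mathbb D(P\|P^\star) + \mathbb E_P\!\left[\log\frac{P^\star}{Q}\right],
\end{align*}
valid since $P\ll Q$ by feasibility and $P^\star/Q>0$ on the support of $Q$. By the additive decomposition above,
$\mathbb E_P[\log(P^\star/Q)] = \sum_{x_0}\hat\mu_0(x_0)\alpha(x_0) + \sum_{j,\tau}\hat\nu_j(\tau)\beta(\tau,j)$,
a quantity pinned by the feasibility constraints. Hence it agrees with its value at $P^\star$, namely $\mathbb D(P^\star\|Q)$, yielding $\mathbb D(P\|Q) = \mathbb D(P\|P^\star) + \mathbb D(P^\star\|Q)\ge \mathbb D(P^\star\|Q)$, with equality iff $P=P^\star$ on the support of $P^\star$; uniqueness is immediate from strict convexity. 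The main obstacle is the telescoping step --- in particular, ensuring that non-absorbed paths do not introduce an unwanted $x_t$-dependence, which follows from the endpoint convention $\mathbf D_t=I$. Once that is verified, the Pythagorean conclusion is routine.
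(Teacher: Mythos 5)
Your argument is correct, and it reaches the paper's conclusion by a genuinely different route. The paper disintegrates each law into an end-point coupling (on $x_0$ and the space-time arrival index) times pinned bridges, invokes the ``shared bridges'' fact of Proposition~\ref{prob:disint1} to kill the conditional term, and then appeals to the single-stage optimality of Proposition~\ref{prop:partial} for the remaining coupling term. You instead telescope the diagonal scalings in \eqref{eq:hats} to show that $\log\bigl(P^\star(\mathbf x)/Q(\mathbf x)\bigr)=\alpha(x_0)+\beta(T,J)$, i.e.\ that $P^\star$ is an exponential tilt of $Q$ by functions of the constrained statistics only, and then close with the Pythagorean identity $\mathbb D(P\|Q)=\mathbb D(P\|P^\star)+\mathbb E_P[\log(P^\star/Q)]$, the last term being constant over the feasible set. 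The two proofs share the same computational core (your telescoping is exactly the content of Proposition~\ref{prob:disint1}), but your verification-style argument needs only the construction \eqref{eq:hats} and its marginal consistency (Proposition~\ref{prop:structureoflaw}), not the optimality statement of Proposition~\ref{prop:partial}; it also delivers, in one stroke, optimality of $P^\star$ over \emph{all} laws matching the constraints (not merely those of the Markov form \eqref{eq:MLaw}) together with uniqueness from $\mathbb D(P\|P^\star)=0\Rightarrow P=P^\star$, which in turn reproves Theorem~\ref{thm:thm2} independently. Your handling of the unabsorbed paths via $\mathbf D_t=I$ is consistent with \eqref{eq:hatpiscalings}, where the last block is scaled by the identity, so no spurious $x_t$-dependence arises. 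One small repair: your justification of the Pythagorean step (``$P^\star/Q>0$ on the support of $Q$'') fails when $\hat\mu_0(x_0)=0$ or $\hat\nu_j(\tau)=0$, since then the corresponding entries of $\mathbf D_0$ or $\mathbf\Lambda_\tau$ vanish (as happens in the paper's traffic example); the identity nevertheless holds because any feasible $P$ assigns zero mass to precisely those paths, so $P\ll P^\star$ on the relevant events --- state this support argument explicitly rather than the positivity claim.
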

\vspace{4pt}

\begin{proof}
To establish the optimality with respect to the relative entropy functional, the basic idea of the proof is to disintegrate each probability distribution on paths as a sum of two products. The joint probability of an initial state, an absorption state, and associated stopping time appears in one term, whereas the joint probability of transient initial and final states appears in the other.
The relative entropy, upon substitution, decomposes into summands involving the respective terms of disintegrated measures.
When minimizing the relative entropy, two summands have already been optimized for due to our selection of $B_\tau^\star$ and $A_\tau^\star$, whereas others vanish by Proposition \ref{prob:disint1}. We now explain this.
To this end, we disintegrate the measure $Q$ and write
 \begin{align*}
 Q(\cdot)& = Q(\cdot \mid X_0 = x_0,X_T=j, T=\tau) \mathcal Q(x_0,j,\tau)\\
 & + Q(\cdot \mid X_0 = x_0,X_t=x_t) \mathcal Q(x_0,x_t)
\end{align*}
 where $\mathcal Q(x_0,j,\tau),\mathcal Q(x_0,x_t)$ are joint probabilities defined as
 \begin{align}
     \mathcal Q(x_0,j,\tau)& := Q(X_0 = x_0,X_T=j, T=\tau), \label{eq:j1} \\
  \text{and }   \mathcal Q(x_0,x_t) &:=  Q(X_0 = x_0,X_t=x_t), \label{eq:j2}
 \end{align}
 for any $x_0, x_t \in \Vt, j \in \Va$ and $\tau \in [1,t]$.
 
 The same applies to $P$ with $\mathcal P(x_0,j,\tau),\mathcal P(x_0,x_t)$ being the respective counterparts to \eqref{eq:j1},\eqref{eq:j2}. It is straightforward to verify that  
\begin{align*}
\mathbb D(P \parallel  Q) &=\sum_{\substack{x_0 \in \Vt, j \in \Va, \tau \in [1,t]}}
  \mathcal P(x_0,j,\tau) \log \frac{\mathcal P (x_0,j,\tau)}{\mathcal Q (x_0,j,\tau)}  \\
 &+ \sum_{x_0,x_t \in \Vt} \mathcal P(x_0,x_t) \log \frac{\mathcal P (x_0,x_t)}{\mathcal Q (x_0,x_t)}
 \end{align*}
when Proposition \ref{prob:disint1} holds and by noting

\begin{align*}
    \sum_{\mathbf x \in \bigcup\limits_{\tau \in [1,t],{j \in \Va}}  \! \! \!\!\!\! \Xt}  \! \! \! \! \!  \! \! \! Q(\mathbf x \mid X_0 = x_0,X_T=j, T=\tau) =1, \\
   \text{and }  \sum_{\mathbf x \in \Vtpt}  Q(\mathbf x \mid X_0 = x_0,X_t=x_t) =1.
\end{align*}%
Therefore, the relative entropy between the two laws is dictated only by joint probabilities.
Here, we note that if the kernel of $P$ is such that $\hat B_\tau=B_\tau^\star$, and $\hat A_\tau=A_\tau^\star$ as in \eqref{eq:hats}, then
\begin{align*}
\mathbb D(P \parallel  Q)&= \sum_{x,y}
 \hat\mu(x) \varPi^\star(x,y) \bigg [\log \frac{\hat\mu(x)}{\mu(x)} + \log \frac{\varPi^\star(x,y)}{\varPi(x,y)} \bigg],   
\end{align*}
where the summations are carried over all the indices and $\mu(x) = \mu_0(x)$ for $x \in \Vt$. Since $\varPi^\star(x,y)$ is row stochastic, $\mathbb D(P \parallel  Q)$ equals
\begin{align*}
 \sum_{x}
 \hat\mu(x) \log \frac{\hat\mu(x)}{\mu(x)} +\sum_{x,y} \hat \mu(x) \varPi^\star(x,y)\log \frac{ \varPi^\star(x,y)}{\varPi(x,y)}.
\end{align*}
The first term is independent of our choice of the kernel, whereas the second is optimal by Proposition \ref{prop:partial}, attesting optimality in the context of Problem \ref{prob:partialSB}  to seek transition probabilities to match the space-time marginals. This completes the proof. 
\end{proof}

\vspace{4pt}

\begin{remark}
    Throughout, we assume feasibility when dealing with optimization problems. To our interest, the feasibility of Problem \ref{prob:SBwithST} or equivalently Problem \ref{prob:KL-paths-mult} requires that a suitable number of walkers stationed initially at various vertices have a path(s) to each absorbing vertex within some number of steps that would allow matching the stopping-time marginals. 
    This guarantees scheduling the transfer of the right amount of mass to the absorbing vertices according to the temporal marginals. A sufficient condition for feasibility on the prior graph topology is that the transient vertices are always adjacent to the absorbing ones, and the set $\Vt$ forms a strongly connected component. This, indeed, prevents the prior telescopic expansion from having zero elements. That is to say, no diagonal matrices $\mathbf D_0,\mathbf \Lambda$ can scale the zero elements of $\varPi$ to achieve nontrivial elements of the posterior telescopic expansion.
\hfill{$\square$}
    \end{remark}

 \section{Regularized transport on Graphs} \label{sec:duality}

The variational formalism, where in the spirit of Schr\"odinger's bridges, we minimize the relative entropy, is fairly flexible and can incorporate the cost of transportation into our model. 
Specifically, suppose that we would like to select a Markov policy for guiding walkers on a graph so as to minimize the average transportation cost in their journey between their starting distribution and their respective destinations. As in the setting of the earlier framework, their marginals at the end of the journey can be both in space and time as they reach absorbing states with specified arrival-time probabilities.

Now, assume that the cost of traversing an edge $(x_{\tau-1},x_{\tau})$ at time $\tau$ is known and quantified by a function $U_{\tau}(x_{\tau-1},x_{\tau})$. Evidently, the cost of transporting along $\mathbf x=(x_0,\ldots,x_t)$ is
\begin{align*}
U(\mathbf x):=\sum_{\tau \in [1,t]}U_\tau(x_{\tau-1},x_{\tau}),
\end{align*}
and the cumulative cost of transporting with a distribution $P$ of paths of random walkers over a window in time $[0,t]$ is
\begin{align*}
\mathbb J(P)&=\sum_{\mathbf x}
P(\mathbf x)U(\mathbf x),
\end{align*}
where the summation runs over paths that satisfy specifications.
As explained in \cite{chen2016robust,chen2021stochastic}, there are often practical advantages in modifying the cost functional of our control problem by adding an entropic regularization term, e.g., to minimize
\begin{align*}
\mathbb J(P)+\beta^{-1}\sum_{\mathbf x}
P(\mathbf x)\log P(\mathbf x).
\end{align*}
The constant $\beta^{-1}$ is thought of as ``temperature" and helps match the units\footnote{In statistical physics, $\beta^{-1}=k_BT$, i.e., the product of the Boltzmann constant and the temperature.} to those of $U$. From another angle, increasing temperature promotes randomness in selecting alternative paths and thereby trades off cost for robustness \cite{chen2016robust}.

More generally, we may choose to penalize the entropic distance (relative entropy) of a law $P$ from a given prior law $Q$. As before, $Q$ may be available as a point of reference. Thus, we may
adopt a ``free energy'' functional,
\begin{align*}
\mathbb F(P \parallel Q):=\mathbb J(P)+\beta^{-1}\mathbb D(P \parallel Q),
\end{align*}
to replace $\mathbb J$ as our optimization functional and seek a solution to the following problem.
\vspace{4pt}
\begin{problem}\label{prob:last}
     Consider a prior Markov law $Q$ on paths in $\mathcal V^{t+1}$ is given as in \eqref{eq:priorlaw} together with a probability vector $\hat\mu_0  \in \mathbb R_+^{n+m}$ with support in $\Vt\subset \mathcal V$ and a set of vectors 
    $\{\hat \nu_j  \in \mathbb R_+^{t} \mid j \in  [1,m]\}$
    as stopping-time marginals at $\Va$ such that \eqref{eq:ineq} is satisfied.
        Determine the law
        \begin{align}\label{eq:likelihoodfn2}
       P^\star  := \arg \min_{P \ll Q}  \mathbb F(P \parallel  Q),
        \end{align}
   subject to \eqref{eq:constr1} and \eqref{eq:constr2}.  \hfill{$\square$}
\end{problem}
\vspace{4pt}

Observing that
\begin{align*}
\mathbb J(P) = -\beta^{-1}\sum_{\mathbf x}
P(\mathbf x) \log e^{-\beta U(\mathbf x)},
\end{align*}
we can write that
\begin{align*}
\mathbb F(P \parallel  Q) = \mathbb D(P \parallel  Qe^{-\beta U}).
\end{align*}
This is identical in form to the relative entropy, albeit the quantity $Qe^{-\beta U}$ that combines cost and prior is a measure that may not be normalized (hence, not necessarily a probability measure). Yet, all the steps in our previous analysis carry through, and the law $P^\star$ to optimize this mix of cost and divergence from the nominal $Q$ can be similarly obtained. The new prior $Qe^{-\beta U}$ inherits the Markovian structure from $Q$ since $e^{-\beta U}$ factors similarly as
\begin{align*}
e^{-\beta U}&=f^{\rm cost}_1(x_0,x_1)\times \cdots \times f^{\rm cost}_t(x_{t-1},x_t),
\end{align*}
for $f^{\rm cost}_\tau(x_{\tau-1},x_\tau) = e^{-\beta U_\tau(x_{\tau-1},x_\tau)}$. Therefore, the posterior $P^\star $ that is obtained by minimizing $\mathbb F (P \parallel  Q)$ subject to constraints on spatio-temporal marginals as before is also Markov. We summarize it as follows.
\vspace{4pt}
\begin{theorem}
    Assuming that Problem \ref{prob:last} is feasible, the solution $P^\star$ is unique, and it is Markov. \hfill{$\square$}
\end{theorem}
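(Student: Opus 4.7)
The plan is to reduce Problem \ref{prob:last} to an instance of Problem \ref{prob: KL-paths-mult} by absorbing the transport cost into a modified reference measure, and then invoke the theory already developed. Concretely, following the paper's own observation, I would rewrite
\[
\mathbb F(P\|Q) \;=\; \beta^{-1}\,\mathbb D(P \,\|\, \tilde Q), \quad \text{where } \tilde Q(\mathbf x) := Q(\mathbf x)\, e^{-\beta U(\mathbf x)}.
\]
Since $\beta^{-1}>0$, the minimizer is unchanged if we replace $\mathbb F$ by $\mathbb D(P\|\tilde Q)$. The measure $\tilde Q$ is positive on the support of $Q$ but in general not normalized; this is immaterial, because the formal definition \eqref{KL} and the Lagrangian derivation in Section \ref{sec:markov} never used the fact that the second argument integrates to one.

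Next I would verify that $\tilde Q$ inherits the Markov factor structure \eqref{eq:Markov} from $Q$. The prior factors as $Q(\mathbf x)=\mu_0(x_0)\,\Pi_1(x_0,x_1)\cdots \Pi_t(x_{t-1},x_t)$, and the excerpt already exhibits
\[
e^{-\beta U(\mathbf x)} \;=\; f^{\rm cost}_1(x_0,x_1)\cdots f^{\rm cost}_t(x_{t-1},x_t),
\]
so $\tilde Q(\mathbf x)$ factors as a product of successive rank-$2$ tensors $\tilde q_\tau(x_{\tau-1},x_\tau):=\Pi_\tau(x_{\tau-1},x_\tau)\,f^{\rm cost}_\tau(x_{\tau-1},x_\tau)$, multiplied by $\mu_0(x_0)$. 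Hence Theorem \ref{thm:thm2} applies verbatim with $\tilde Q$ in place of $Q$: setting up the Lagrangian for $\mathbb D(P\|\tilde Q)$ subject to the initial marginal \eqref{eq:SB2} and the cumulative absorption constraints \eqref{eq:cumulative}, the minimizer takes the form \eqref{eq:convenient3} (with $\tilde Q$ replacing $Q$), and the correction tensor splits into a product of rank-$1$ factors exactly as in the proof of Theorem \ref{thm:thm2}. Therefore $P^\star$ inherits the Markov factorization \eqref{eq:Markov}.

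For uniqueness I would note that $P \mapsto \mathbb D(P\|\tilde Q)$ is strictly convex on the probability simplex wherever it is finite, the feasibility set determined by the affine constraints is convex, and feasibility is assumed. Thus the minimizer is unique, and by the argument above it is Markov.

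The only subtle point, and the step I would spend the most care on, is the passage from a probability prior to a positive unnormalized measure $\tilde Q$. One must check that (i) absolute continuity $\tilde Q \gg P$ is equivalent to $Q \gg P$ (true because $e^{-\beta U}>0$ everywhere on the finite path space), and (ii) the constructive Sinkhorn-type scheme of Proposition \ref{prop:partial} and the diagonal-scaling recursion of Section \ref{sec:II} go through with $\tilde Q$, since those arguments depend only on nonnegativity and on the factored structure, not on row-sums of the reference measure. Once these are in place, the same explicit construction \eqref{eq:hats}, now using the scaled kernels derived from $\tilde Q$, yields the optimal transition probabilities and confirms that $P^\star$ is Markov.
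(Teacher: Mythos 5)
Your proposal is correct and follows essentially the same route as the paper: absorb the cost into the reference measure via $\mathbb F(P\|Q)=\beta^{-1}\mathbb D(P\|Qe^{-\beta U})$, observe that $Qe^{-\beta U}$ is an (unnormalized) positive measure inheriting the Markov factor structure since $e^{-\beta U}$ splits into rank-$2$ factors, and then reapply the argument of Theorem \ref{thm:thm2}, with strict convexity giving uniqueness. Your added care about the unnormalized reference and absolute continuity is a welcome, if minor, refinement of the paper's argument.
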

\vspace{4pt}

\begin{remark}
The construction of the solution to Problem \ref{prob:last} proceeds precisely as in the earlier sections where the prior is normalized. Moreover, the solution to Problem \ref{prob:last} $P^\star(\mathbf x;\beta)$ recovers the solution for minimizing $\mathbb J$ subject to the same constraints as $\beta^{-1}\to 0$. Entropic regularization, besides its practical significance, is often used in computational tools since minimizing $\mathbb F$ is more efficient than minimizing $\mathbb J$ and thus a standard practice in the recent literature of optimal mass transport and machine learning \cite{chen2016entropic,peyre2019computational}. \hfill{$\square$}
\end{remark}

\section{Examples}\label{sec:examples}
Below, we first return to our motivating example of reconciling wins and losses in a martingale game, where unlikely stopping-time marginals may reveal a foul play. Next, we discuss a second academic example of scheduling probabilistic flow on a network so as to regulate the capacity at critical nodes. This example setting is framed as a transportation network of some city neighborhoods. The city has two one-way roads linking these neighborhoods to the downtown, and passing through either road is the stopping rule. We then assume that one of the roads closes during peak traffic hours and seek to redirect the flow with a Markov policy so that it abides by a prescribed stopping-time marginal for the traffic going across to meet the capacity constraints.

\subsection{De Moivre's Martingale}

We now reexamine the betting game in Section \ref{sec:martingale}, abstracted via  Fig. \ref{fig:fig1} and the graph in Fig. \ref{fig:DeMoivre}. 
Consider the players start the game with $1$ to $4$ tokens. Each state is represented as a node. In addition, node $0$ represents ruin, while node $5$ represents the profit cap, and reaching either node triggers exit from the game. Thus, nodes $0$ and $5$ are absorbing. Assume that players' wealth is distributed uniformly over the nodes $\{1,2,3,4\}$ at the start of the game.
 By letting the prior be that of a fair game, the prior transition probabilities are shown in Fig. \ref{fig:DeMoivre}.
\begin{figure}[H]
    \centering
    \includegraphics[width=\columnwidth]{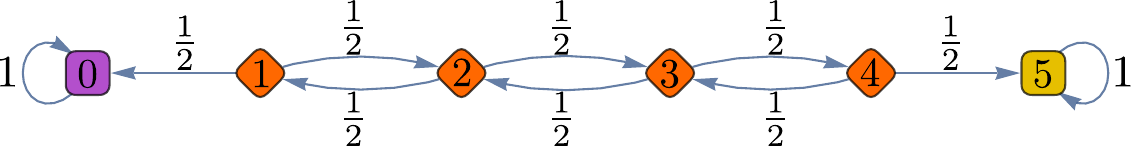}
    \caption{Network topology with prior transition probabilities.}
    \label{fig:DeMoivre}
\end{figure}
\noindent We assume that we observe three consecutive rounds. The portion of players exiting the game due to ruin or success is recorded, which gives the stopping-time marginals 
\begin{align*}
    \hat \nu^{\rm ruin} = \begin{bmatrix}
        1/8 & \textBF{1/5} & 1/16 
    \end{bmatrix}, \text{ and   }
    \hat \nu^{\rm win} = \begin{bmatrix}
        1/8 & 1/16 & 1/16 
    \end{bmatrix}.
\end{align*}

While the fair game should result in $\hat \nu^{\rm ruin} = \hat \nu^{\rm win}$, we notice an unexpectedly high percentage of players exiting in the second round due to ruin. We seek to pinpoint possible cheating that could explain the marginals. To this end, we determine the most likely transition kernel $(\Pi_\tau^\star)_{\tau \in [1,3]}$. By \eqref{eq:Sinkhorn}-\eqref{eq:cs}, we obtain the sequence
   \begin{align*}
 \Bigg\{      B^\star_1 &= \begin{bmatrix}
        \phantom{0}0.5\phantom{0}  & 0 \\
        0 & 0 \\ 
        0 & 0 \\
        0 & \phantom{0}0.5\phantom{0}
    \end{bmatrix} ,
    A^\star_1 = \begingroup
\setlength\arraycolsep{2pt} \begin{bmatrix}
 0 & \phantom{0}0.5\phantom{0} & 0 & 0 \\
 \textBF{0.858} & 0 & \textBF{0.142} & 0 \\
 0 & \phantom{0}0.5\phantom{0} & 0 & \phantom{0}0.5\phantom{0} \\
 0 & 0 & \phantom{0}0.5\phantom{0} & 0 \\
    \end{bmatrix},  \endgroup \\
      B^\star_2 &=
    \begin{bmatrix}
 \textBF{0.933} & 0 \\
 0 & 0 \\
 0 & 0 \\
 0 & \phantom{0}0.5\phantom{0}\hspace*{1pt} \end{bmatrix}  ,
    A^\star_2 =  \begingroup
\setlength\arraycolsep{2pt} \begin{bmatrix}
 0 & \textBF{0.067} & 0 & 0 \\
 \phantom{0}0.5\phantom{0}\hspace*{2pt} & 0 & \phantom{0}0.5\phantom{0} & 0 \\
 0 & \textBF{0.407} & 0 & \textBF{0.593} \\
 0 & 0 & \phantom{0}0.5\phantom{0} & 0 \\
    \end{bmatrix}, \endgroup \\
B^\star_3 &= \begin{bmatrix}
 \phantom{0}0.5\phantom{0} & 0 \\
 0 & 0 \\
 0 & 0 \\
 0 & \textBF{0.657} 
\end{bmatrix},
A^\star_3 =  \begingroup
\setlength\arraycolsep{2pt} \begin{bmatrix}
 0 & \phantom{0}0.5\phantom{0} & 0 & 0 \\
 \phantom{0}0.5\phantom{0} & 0 & \phantom{0}0.5\phantom{0} & 0 \\
 0 & \phantom{0}0.5\phantom{0} & 0 & \phantom{0}0.5\phantom{0} \\
 0 & 0 & \textBF{0.343} & 0 \\
 \end{bmatrix} \endgroup \Bigg\}
\end{align*}
that makes up the sought transition kernel. The result suggests that the game may have been staged in that 
with high probability ($80 \%$ chance),  players reach ruin in the second round, given that they start with $2$ tokens. This shows up in the value of $A_1^\star(2,1) \times B_2^\star(1,0)$. In the third round, the chances of winning may have been tampered with again so that the players with more tokens are more likely to win. This could have been arranged to make the game look fair in the third round.

 \subsection{ Congestion Control:  Traffic Flow Regulation}

  The graph in Fig.\ \ref{fig:Manhat} abstracts traffic flow in a city where nodes $1$ and $2$ represent two one-way roads, while nodes $3$ through $7$ represent neighborhoods around them. The two one-way roads connect the city to its downtown.
 We consider traffic flow during rush hours and assume that citizens cross either of the roads toward downtown during the time window at hand. Thus, we treat the two roads as absorption nodes and eventual destinations for the traffic.
 
  With a plan to repair one of the roads, the city hall aims to redirect traffic for three hours of scheduled maintenance. During this time, traffic must not exceed the capacity of the operating road to avoid traffic jams. We take this capacity value as $0.16$ of the average daily number of cars moving in the city. We model the problem on a network with two absorbing states. In that, $\Va = \{1,2\}$ with node $1$ being the road to be mended and $2$ being the one to remain operational.
   \begin{figure}[H]
    \centering
    \includegraphics[width=\columnwidth]{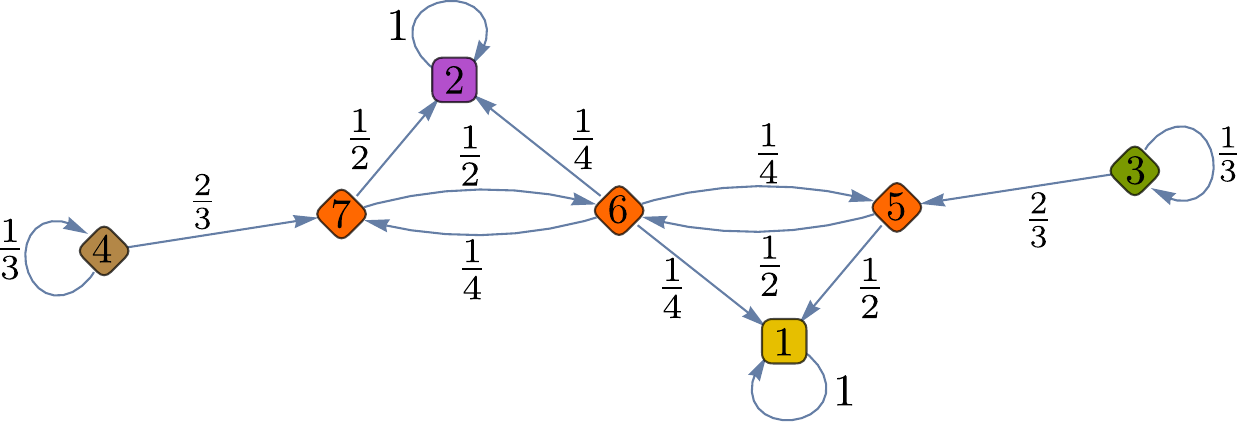}
    \caption{Network representation of vital neighborhoods in a city and associated transition probabilities under normal operation of the city roads.} 
    \label{fig:Manhat} 
\end{figure}%
  Typical flow transitions that serve to define the prior are also shown in Fig. \ref{fig:Manhat}.  
  During times $\tau\in\{3,4,5\}$ of maintenance, a possible policy is to adjust the prior kernel, redirecting traffic toward the road that remains open. Then, the prior is completed by taking
 \begin{align*}
 \Pi_3 = \Pi_4 = \Pi_5 = \left[ \begin{array}{c|c}
     \begin{matrix}
         1 &  0 \\
         0 & 1 \\
     \end{matrix}  & \begin{matrix}
        0\phantom{0} &  & \cdots &  & \phantom{0}0 \\
        0\phantom{0} &  & \cdots &  & \phantom{0}0
     \end{matrix} \\ 
     \hline & \\[\dimexpr-\normalbaselineskip+2pt]
      \begin{matrix}
     0&0\\
     0&0\\
     0&0\\
     0&\frac12\\[1.5pt]
     0&\frac12 
     \end{matrix} & \begin{matrix}
    \frac13&0&\frac23&0&0\\
     0&\frac13&0&0&\frac23\\
     0&0&0&1&0\\[1.5pt]
     0&0&\frac14&0&\frac14\\[1.5pt]
     0&0&0&\frac12&0
     \end{matrix}
     \end{array} \right].
 \end{align*}
 
 Let the initial probability vector be such that
 \begin{align*}
     \hat \mu_0(3:7) = \begin{bmatrix}
          0.4 &  0.4 &  0.05 & 0.1 & 0.05 
     \end{bmatrix}^\intercal.
 \end{align*}
With the closing of one of the roads, the prior temporal marginals become
\begin{align*}
  \nu_1 &= \begin{bmatrix}
     0.05 & 0.158 &	0 &	0  & 0 
  \end{bmatrix},\\
 \text{ and }  \nu_2 &=  \begin{bmatrix}
      0.05 & 0.158 &	\textBF{0.197} &	0.127  & 0.101
  \end{bmatrix}.
\end{align*}
We seek to adjust the flow so as not to exceed capacity and thereby target obtaining the time marginals
\begin{align*}
    \hat \nu_1 &=   \begin{bmatrix}
        0.05 & 0.158 &	0 &	0  & 0 
    \end{bmatrix},\\
   \text{and } \hat \nu_2 &=  \begin{bmatrix}
        0.05 & 0.158 &	0.142 &	0.142  & 0.142
    \end{bmatrix}.
    \end{align*}
    Indeed, this is achieved by solving  Problem \ref{prob:KL-paths-mult} to obtain the optimal kernel via \eqref{eq:Sinkhorn}-\eqref{eq:cs}, giving
\begin{align*}
  &  D_0 = \begin{bmatrix}
    0.975 \\
    1.034 \\
    1.008 \\
    1.024 \\
    1.008
    \end{bmatrix} , 
     \begin{bmatrix}
        \Lambda_1^\intercal \\
        \Lambda_2^\intercal\\
        \Lambda_3^\intercal\\
        \Lambda_4^\intercal \\
        \Lambda_5^\intercal
    \end{bmatrix} =\begin{bmatrix}
    0.984 & 0.984 \\
    1.019 & 0.97 \\
    0 & 0.711 \\
    0 & 1.108 \\
    0 & 1.4
    \end{bmatrix}, 
    D_1 = \begin{bmatrix}
    1.114 \\
    0.987 \\
    0.981 \\
    1 \\
    0.957
    \end{bmatrix}, \\
   & D_2 = \begin{bmatrix}
    1.133 \\
    1.147 \\
    1.104 \\
    0.944 \\
    0.908
    \end{bmatrix} , 
    D_3 = \begin{bmatrix}
    1 \\
    1.133 \\
    1.2 \\
    1.104 \\
    1.154 
    \end{bmatrix}, 
     D_4 = \begin{bmatrix}
1 \\
1 \\
1 \\
1.2 \\
1.2 \\
    \end{bmatrix}, 
 \text{and }   D_5 = \begin{bmatrix}
        1 \\ \vdots \\ 1
    \end{bmatrix}.
\end{align*}

\section{Concluding remarks}\label{sec:conclusions}

It is natural to consider analogous problems, such as the instance of a space probe at landing, where a process is stopped when specific targets are reached and stopping criteria are met. Such problems are of evident practical significance.
The present work shows that similar types of problems can be treated within the theory of Schr\"odinger's bridges in discrete time and space. It is envisioned that the possibility to tackle spatio-temporal soft conditioning in several control problems will open a new phase in the developing topic of uncertainty control, complementing recent works such as \cite{chen2019relaxed,chen2022most,tsiotras,bakolas2018finite,caluya2021wasserstein,pavon2021data}.

\balance
\bibliographystyle{ieeetr}
\bibliography{main.bib}

\end{document}